\newtheorem{Theorem}{Theorem}
\newtheorem{Def}{Definition}
\newtheorem{Lemma}{Lemma}
\newtheorem{Assumption}{Assumption}
\begin{document}


\newcommand{\Ket}[1]{\ensuremath{\left | #1 \right \rangle}}
\newcommand{\Bra}[1]{\ensuremath{\left \langle #1 \right |}}
\newcommand{\BraKet}[2]{\ensuremath{\left \langle #1 \right | \left. #2 \right \rangle}}
\newcommand{\Tr}[1]{\ensuremath{\mbox{Tr} \left ( #1 \right )}}
\newcommand{\PTr}[2]{\ensuremath{\mbox{Tr}_{#1} \left ( #2 \right )}}


\setlength{\unitlength}{1cm}


\title{The de Finetti theorem for test spaces} 

\author{Jonathan Barrett}
\email{j.barrett@bristol.ac.uk}
\affiliation{H.~H.~Wills Physics Laboratory, University of Bristol, Tyndall Avenue, Bristol
BS8 1TL, U.K.}

\author{Matthew Leifer}
\email{matt@mattleifer.info}
\affiliation{Institute for Quantum Computing, University of Waterloo, 200 University Avenue West, Waterloo, Ontario, Canada, N2L 3G1}
\affiliation{Perimeter Institute for Theoretical Physics, 31 Caroline Street North, Waterloo, Ontario, Canada, N2L 2Y5}

\date{\today}

\pacs{03.65.Ta, 03.65.Ud, 03.67.-a}

\keywords{exchangeable, de Finetti theorem, test space, nonlocality, no signalling}


\begin{abstract} 
We prove a de Finetti theorem for exchangeable sequences of states on \emph{test spaces}, where a test space is a generalization of the sample space of classical probability theory and the Hilbert space of quantum theory. The standard classical and quantum de Finetti theorems are obtained as special cases. By working in a test space framework, the common features that are responsible for the existence of these theorems are elucidated. In addition, the test space framework is general enough to imply a de Finetti theorem for classical processes. We conclude by discussing the ways in which our assumptions may fail, leading to probabilistic models that do not have a de Finetti theorem. 
\end{abstract}

\maketitle


\section{Introduction}
\label{Intro}

There are many scenarios involving probabilistic reasoning about a large number of systems, where it is important that these systems can be regarded as identical and independent. Classical parameter estimation and quantum tomography, in which a source is calibrated by measuring separately a number of systems it has produced, provide excellent examples. More generally, much of experimental science involves repetition of an experiment, with conclusions drawn from observed relative frequencies, where the conclusions are only valid under the assumptions of identity and independence of the separate trials. A problem that arises, therefore, is what justifies these assumptions? De Finetti theorems are designed to answer this question.

De finetti theorems have a fundamental significance for the subjective Bayesian interpretation of probabilities. They enable the subjective Bayesian to explain why a rational agent treats a sequence of trials as identical and independent, without appealing to a notion of objective chance, or to unknown probabilities. In addition to this fundamental significance, they are also important technical tools. For example, quantum de Finetti theorems have been applied to various problems in quantum information theory, including proofs of the security of quantum key distribution \cite{Renner:2005aa}, algorithms for deciding the separability of bipartite quantum states \cite{Doherty:2004, Ioannou:2007} and global optimization of the maximum output purity of quantum channels \cite{Audenaert:2004}.

In this paper, we provide a de Finetti theorem for nonsignalling states on \emph{test spaces}. A test space is a generalization of the sample space of classical probability theory and the Hilbert space of quantum theory. The de Finetti theorem for test spaces includes the classical and quantum theorems as special cases. It also implies a de Finetti theorem for classical processes, which can also be viewed as a theorem about states in theories that exhibit ``superquantum'' correlations, as have been studied recently in quantum information \cite{Barrett:2005a,Barrett:2005aa,Barrett:2005b,Brassard:2005,Broadbent:2005,Buhrman:2005,Dam:2000,Dam:2005,Jones:2005,Khalfi:1985,Popescu:1994,Short:2005,Short:2005a}.

The remainder of this paper is structured as follows.  In \S\ref{CQReview}, the classical and quantum de Finetti theorems are reviewed and their relevance for Bayesian statistics is discussed.  In \S\ref{Test}, the test space framework is introduced and its connection to the convex sets framework used in \cite{Barnum:2006, Barnum:2006a} is explained.  \S\ref{MainRes} states the de Finetti theorem for test spaces and \S\ref{Cons} outlines some of its consequences, including the classical and quantum theorems, and the theorem for classical processes.  \S\ref{Not} discusses the role of the various assumptions of the test space framework and outlines some more general scenarios in which the theorem fails.  \S\ref{Conc} concludes.

\section{The classical and quantum de Finetti theorems.}
\label{CQReview}

De Finetti introduced his theorem in the context of a subjective Bayesian approach to probability theory \cite{Finetti:1993, Finetti:1990aa}. In this approach, probabilities are not defined as limiting relative frequencies, nor as objective properties of the physical world. Instead they are measures of the degrees of belief of a decision making agent. An immediate question is, why should degrees of belief be represented by real numbers obeying the Kolmogorov axioms? De Finetti's answer is provided by his famous Dutch Book argument \cite{Finetti:1993, Finetti:1990aa}\footnote{More sophisticated arguments, framed in terms of decision theory, are advocated by many contemporary subjective Bayesians \cite{Savage:1972aa, Bernardo:2000aa}.}. But a second question is, why do the usual rules of statistical inference apply to these quantities? In particular, how and why should relative frequencies be used to update probability assignments? 

Consider an experiment in which a trial with $d$ possible outcomes is repeated $n$ times. Under a standard sort of analysis, the trials are first judged to be independent and identically distributed, so that the probability of getting the outcome sequence $x_1,\ldots,x_n$ is given by $P^n(x_1,\ldots,x_n) = p(x_1)\times\cdots\times p(x_n)$, for some ``unknown'' probability distribution $p$. The distribution $p$ is a parameter to be estimated. 

To the subjective Bayesian, however, this is problematic since no sense can be given to an ``unknown'' probability. De Finetti provides an alternative analysis. If the trial can in principle be repeated an arbitrary number of times, then the joint distribution over outcome sequences $P^n$ should be defined for any $n$, so consider an infinite sequence of distributions $P^1,P^2,\ldots$  Suppose that for the first $n$ trials, the agent is indifferent as to whether any further trials are actually performed or not, and that the agent is also indifferent as to the order in which the outcomes are reported. This suggests that the sequence $P^1,P^2,\ldots$ should satisfy the following. 

\begin{Def}
$P^n$ is \emph{symmetric} if and only if it is invariant under permutations of the $n$ tests. That is, $P^n(x_1,\ldots,x_n) = P^n(x_{\pi(1)},\ldots,x_{\pi(n)})$ for all permutations $\pi:\{1,2,\ldots,n\}\rightarrow \{1,2,\ldots,n\}$.
\end{Def}

\begin{Def}
The sequence $P^1,P^2,\ldots$ is \emph{exchangeable} if and only if 
\begin{enumerate}
\item $\forall n$, $P^n$ is symmetric, 
\item $\forall n$, $\forall x_1,\ldots,x_n$, \\ $P^n(x_1,\ldots,x_n)=\sum_{x_{n+1}=1}^d \ P^{n+1}(x_1,\ldots,x_n,x_{n+1})$.
\end{enumerate}
\end{Def}

\begin{Theorem}[de Finetti's representation theorem \cite{Finetti:1990aa,Hewitt:1955}]\label{classicaldefinetti}
If the sequence $P^1,P^2,\ldots$ is exchangeable, then $P^n$ can be written in the form
\begin{equation}
\label{indep}
P^n(x_1,\ldots,x_n) = \int_{\Delta_d} \!\mathrm{d}\mu(p) \ p(x_1) \cdots p(x_n),
\end{equation}
where $\Delta_d$ is the set of all probability distributions over the outcomes $\{1,\ldots,d\}$, $\mu$ is a probability measure on $\Delta_d$, $\mu$ is independent of $n$, and $\mu$ is unique. 
\end{Theorem}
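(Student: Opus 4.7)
The plan is to first upgrade the family $\{P^n\}$ to a single probability measure $P$ on the infinite product $\{1,\ldots,d\}^{\mathbb{N}}$, and then to decompose $P$ as a mixture of i.i.d.\ measures indexed by $\Delta_d$. For the first step, the marginalization condition (item 2 in the definition of exchangeability) is exactly the consistency hypothesis of the Kolmogorov extension theorem, so such a $P$ exists and is unique, and it inherits invariance under all finite permutations of $\mathbb{N}$ from the symmetry of the individual $P^n$.

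For the decomposition, I would use the empirical-frequency approach. Let $(X_1,X_2,\ldots)$ have law $P$ and define $\hat p_N(x) := \frac{1}{N}\sum_{i=1}^N \mathbf{1}[X_i=x]$. Because permuting coordinates within any initial block of length at least $N$ leaves $\hat p_N$ unchanged, the process $(\hat p_N)$ is a reverse martingale with respect to the decreasing filtration of exchangeable $\sigma$-algebras, and it converges $P$-almost surely to a random probability vector $p \in \Delta_d$. Let $\mu$ be the law of $p$ on $\Delta_d$. Now condition on $p$: by the Hewitt--Savage zero-one law, the exchangeable $\sigma$-algebra is $P$-trivial modulo $p$, and permutation-invariance of the conditional distribution together with the law of large numbers forces $P(X_1=x_1,\ldots,X_n=x_n \mid p) = p(x_1)\cdots p(x_n)$. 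Integrating over $\mu$ yields \eqref{indep}.

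Uniqueness follows because the numbers $P^n(x_1,\ldots,x_n)$ determine all monomial moments $\int p(1)^{k_1}\cdots p(d)^{k_d}\,d\mu(p)$, and by the Stone--Weierstrass theorem these moments determine $\mu$ as a Borel measure on the compact set $\Delta_d$. I expect the main obstacle to be the conditional-independence step: one must show that $p$ is not merely the right one-dimensional marginal but actually a sufficient statistic for the entire sequence, which is precisely where ergodicity of i.i.d.\ measures as extreme points of the convex set of exchangeable measures enters via Hewitt--Savage. An alternative route avoiding this is the Diaconis--Freedman finite de Finetti argument, which expresses $P^n$ as a marginal of a sampling-without-replacement distribution from the type of $P^N$ and then takes $N \to \infty$ using weak-$*$ compactness of probability measures on $\Delta_d$; this is the route that seems most likely to generalize cleanly to the test-space setting considered later in the paper.
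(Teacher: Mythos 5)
The paper does not prove this statement at all: Theorem~\ref{classicaldefinetti} is quoted as a known classical result, with the proof deferred to the cited literature (de~Finetti, Hewitt--Savage), and it is then used as a black box inside the proof of the test-space theorem. So there is no in-paper argument to compare yours against; what can be assessed is whether your sketch is a sound proof of the classical theorem, and it essentially is. The Kolmogorov-extension step is exactly right (item~2 of exchangeability is the consistency condition), and the reverse-martingale/empirical-frequency route is the standard modern proof. One caveat: your appeal to the Hewitt--Savage zero-one law to get $P(X_1=x_1,\ldots,X_n=x_n\mid p)=p(x_1)\cdots p(x_n)$ is slightly circular as phrased, since the triviality of the exchangeable $\sigma$-algebra conditional on the directing measure is usually a \emph{consequence} of the conditional i.i.d.\ structure rather than a route to it. The clean way to close that step is the one implicit in your reverse-martingale setup: compute $E[\mathbf{1}[X_1=x_1]\cdots\mathbf{1}[X_n=x_n]\mid\mathcal{E}_N]$ as the sampling-without-replacement probability from the first $N$ coordinates (by finite exchangeability), and let $N\to\infty$ so that it converges to $\prod_i p(x_i)$; this is also exactly the Diaconis--Freedman argument you mention as an alternative, so your two routes are really the same computation packaged differently. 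Your uniqueness argument via monomial moments and Stone--Weierstrass is correct (note that on $\Delta_d$ the constraint $\sum_x p(x)=1$ lets you recover lower-degree moments from degree-$n$ ones, so the $P^n$ do determine the full moment sequence). Finally, your closing speculation that the finite sampling argument is what generalizes to test spaces is not how the paper proceeds: the paper's Theorem~\ref{testdefinetti} does not regeneralize the classical proof but reduces the test-space problem to the classical theorem via a fiducial informationally complete measurement, plus a separate argument to confine the support of $\mu$ to the physical state space.
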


This is the classical de Finetti theorem for infinite sequences and a finite number of outcomes. It shows that if the sequence $P^1,P^2,\ldots$ is exchangeable, then $P^n$ can be written \emph{as if} it were generated via a probability distribution $\mu$ over unknown probabilities $p$. In particular, if the first $m<n$ trials are performed, and standard Bayesian updating applied directly to the joint distribution $P^n$, one finds that the posterior probability for the remaining $n-m$ trials is given by 
\begin{align}
P^n(x_{m+1},&\ldots,x_n|x_1,\ldots,x_m) = \nonumber\\ 
& \int_{\Delta_d}\!\mathrm{d}\mu(p|x_1,\ldots,x_m)\ p(x_{m+1})\cdots p(x_n),
\end{align}
where $\mu$ is updated as if Bayesian conditioning had been performed on an unknown parameter $p$ directly:
\[
\mathrm{d}\mu(p|x_1,\ldots,x_m) = \frac{\mathrm{d}\mu(p)\times p(x_1)\times\cdots\times p(x_m)}{P^n(x_1,\ldots,x_m)}.
\]

The quantum de Finetti theorem is a generalization of the classical theorem. It was first presented in Refs.~\cite{Hudson:1976aa, Hudson:1981aa}, and a simpler proof given in Refs.~\cite{Caves:2002aa, Fuchs:2004aa}. Whereas the classical theorem concerned the outcome probabilities of a test that could be repeated an arbitrarily large number of times, the quantum theorem concerns the joint state of an arbitrarily large number of quantum systems. Suppose that each of these systems is associated with a $d$-dimensional Hilbert space $H_d$. The joint state of $n$ systems is then a density operator on the tensor product Hilbert space $H_d^{\otimes n}$. Exchangeability is defined for a sequence of states $\omega^1,\omega^2,\ldots$, where $\omega^n$ is a state of $n$ systems.

\begin{Def}
$\omega^n$ is \emph{symmetric} if and only if it is invariant under permutations of the $n$ systems, i.e.,
\[
\mathrm{Tr}\left(Q_1\otimes\cdots\otimes Q_n \, \omega^n \right) = \mathrm{Tr}\left(Q_{\pi(1)}\otimes\cdots\otimes Q_{\pi(n)}\,\omega^n \right),
\]
for all permutations $\pi:\{1,2,\ldots,n\}\rightarrow \{1,2,\ldots,n\}$, and for any projection operators $Q_1,\ldots,Q_n$.
\end{Def}
Note that this is equivalent to requiring that $\omega^n = S_{\pi} \omega^n S_{\pi}^\dagger$ for all permutations $\pi$, where $S_{\pi}$ is the operator that permutes the $n$ systems according to $\pi$.
\begin{Def}
The sequence $\omega^1,\omega^2,\ldots$ is \emph{exchangeable} if and only if 
\begin{enumerate}
\item $\forall n$, $\omega^n$ is symmetric, 
\item $\forall n$, $\omega^n = \mathrm{Tr}_{n+1} (\omega^{n+1})$, where $\mathrm{Tr}_{n+1}$ is the partial trace over the $n+1$st system.
\end{enumerate}
\end{Def}

\begin{Theorem}[the quantum de Finetti theorem]\label{quantumdefinetti}
If the sequence $\omega^1,\omega^2,\ldots$ is exchangeable, then $\omega^n$ can be written in the form
\begin{equation}\label{quantumindep}
\omega^n = \int_{\Omega} \!\mathrm{d}\mu(\omega) \ \omega\otimes\cdots\otimes \omega,
\end{equation}
where $\Omega$ is the set of density operators on $H_d$, $\mu$ is a probability measure on $\Omega$, $\mu$ is independent of $n$, and $\mu$ is unique.
\end{Theorem}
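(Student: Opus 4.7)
The plan is to adapt the finite de Finetti argument of \cite{Caves:2002aa, Fuchs:2004aa} and then pass to the limit. The core idea is: for each $N \geq n$, approximate the restriction to $n$ systems of a symmetric $\omega^N$ by a convex combination of product states with error vanishing as $N\to\infty$, and then extract a limiting measure by a compactness argument. The exchangeability of the sequence, via the partial-trace condition, guarantees that every $\omega^n$ arises as such a restriction for arbitrarily large $N$.

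For the finite step, I would first purify a given symmetric $\omega^N$ to a pure state $|\Psi^N\rangle$ lying in the symmetric subspace of $(H_d \otimes H_d)^{\otimes N}$; this is possible because the Schmidt purification of a permutation-invariant density operator can be symmetrized using the diagonal permutation action. Next, I would use the overcomplete resolution of the identity on $\mathrm{Sym}^N(H_d\otimes H_d)$ in terms of coherent product vectors $|\phi\rangle^{\otimes N}$. Expanding $|\Psi^N\rangle\langle\Psi^N|$ in this resolution and tracing out the $N-n$ extra primary systems together with all $N$ ancillas, the diagonal contribution produces an expression of the form $\int d\mu_N(\omega)\,\omega^{\otimes n}$, while the off-diagonal cross terms are suppressed by overlap factors $|\langle\phi|\phi'\rangle|^{N-n}$ that collapse in the trace norm to a residual of order $O(d^2 n / N)$.

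For the limit, observe that the set $\Omega$ of density operators on $H_d$ is compact, so by Prokhorov's theorem the space of Borel probability measures on $\Omega$ is weak-$*$ compact. The finite-$N$ measures $\mu_N$ then admit a weak-$*$ convergent subsequence; the limit $\mu$ gives $\omega^n = \int \omega^{\otimes n}\, d\mu(\omega)$ exactly, using continuity of $\omega \mapsto \omega^{\otimes n}$ as a map into trace-class operators. A diagonal subsequence across $n$ produces a single measure serving all $n$ simultaneously, the consistency following from the partial-trace clause in the definition of exchangeability. Uniqueness of $\mu$ follows from a moment argument: polynomials in the matrix entries of $\omega$ arise as expectation values $\mathrm{Tr}(A_1 \otimes \cdots \otimes A_n\, \omega^{\otimes n})$, and these are dense in $C(\Omega)$ by Stone--Weierstrass, so the collection $\{\omega^n\}_{n\geq 1}$ determines $\mu$.

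The main obstacle will be the finite de Finetti estimate itself. Bounding the off-diagonal cross terms after tracing out requires nontrivial combinatorics on the symmetric subspace --- essentially, control of the integral of $|\langle\phi|\phi'\rangle|^{N-n}$ against the unitarily invariant measure on pure product states --- and producing a bound that is uniform in the choice of $\omega^N$ is the technical crux. The compactness step and uniqueness via Stone--Weierstrass are more routine once the finite estimate is in hand.
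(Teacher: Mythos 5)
Your proposal is correct in outline, but it takes a genuinely different route from the one in the paper. Here Theorem~\ref{quantumdefinetti} is never attacked through the symmetric subspace: the paper proves the general test-space theorem (Theorem~\ref{testdefinetti}) by introducing a fictitious informationally complete measurement, pushing the exchangeable sequence of states forward to an exchangeable sequence of classical distributions, applying the classical de Finetti theorem, and pulling the measure back (with a separate argument forcing its support onto genuine states); the quantum theorem then follows because a density operator on $H_d^{\otimes n}$ is determined by its probabilities for product projective measurements, so quantum states embed into the nonsignalling states on $(\mathcal{P}(H),\mathcal{M}(H))^{\times n}$. That is the Caves--Fuchs--Schack strategy of \cite{Caves:2002aa} --- note their proof is of this reduction-to-classical type, not the finite-estimate type you attribute to them; what you describe (purification into $\mathrm{Sym}^N(H_d\otimes H_d)$, the coherent-state resolution of the identity, an $O(d^2n/N)$ trace-norm bound, then weak-$*$ compactness and Stone--Weierstrass uniqueness) is instead the Hudson--Moody/K\"onig--Renner/Christandl et al.\ route \cite{Hudson:1976aa,Koenig:2005aa,Christandl:2006aa}. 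Your route buys strictly more in the purely quantum setting: a quantitative finite-$N$ statement as a byproduct, and a measure that is automatically supported on bona fide density operators, with no analogue of the paper's $(1+\epsilon)^n$ support-restriction step. What it costs is generality: every step leans on Hilbert-space representation theory (the symmetric subspace, overcompleteness of product vectors) that has no counterpart for a general test space, which is precisely why the paper takes the softer path. One caution if you execute the plan: the ``diagonal versus off-diagonal'' split of the double integral over $(\phi,\phi')$ is only heuristic, since the diagonal has measure zero; the clean version inserts the resolution of the identity only on the $N-n$ traced-out factors and shows that the resulting partial inner products of $|\Psi^N\rangle$ are close, in an integrated sense, to multiples of $|\phi\rangle^{\otimes n}$. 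Your uniqueness argument via the point-separating subalgebra generated by $\omega\mapsto\mathrm{Tr}(A\,\omega)$ is sound and is essentially the standard one.
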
   

From a fundamental point of view, the quantum de Finetti theorem is particularly important for those approaches to quantum theory that take a subjective view of the quantum state \cite{Caves:2002ab, Fuchs:2002aa, Fuchs:2003aa, Pitowsky:2003aa, Caves:2006}. These approaches are closely related to the Bayesian view of probabilities. A quantum state is taken to represent the degrees of belief of an agent, where these might be beliefs about the potential outcomes of measurements. In this case, the notion of an unknown quantum state, so prevalent in the literature, becomes problematic. The quantum de Finetti theorem shows how to dispense with this notion, at least in some situations.

Both of the theorems just presented assume finite sample spaces (or finite dimensional Hilbert spaces), and concern a number of trials or systems that tends to infinity. Both have been generalized in a number of ways. Classical theorems for a finite number of trials are discussed in \cite{Diaconis:1977aa, Diaconis:1980aa, Kendall:1967} and quantum theorems for a finite number of systems in \cite{Koenig:2005aa, Renner:2005aa, DCruz:2007, Renner:2007, Koenig:2007}. In addition to the quantum de Finetti theorem, there is also a de Finetti theorem for quantum operations \cite{Fuchs:2004ab, Fuchs:2004aa}, for representations of unitary groups \cite{Christandl:2006aa, Koenig:2007} and for unitarily invariant quantum states \cite{Mitchison:2007}.

\section{Test spaces}
\label{Test}

The classical de Finetti theorem involves probabilities of outcome sequences for a test that can in principle be repeated an arbitrarily large number of times. The quantum de Finetti theorem involves the joint quantum state of a number of quantum systems that can in principle be arbitrarily large. Both of these can be viewed as special cases of a more general scenario. Suppose that a single system is associated with a number of different possible tests, which are mutually exclusive in the sense that only one can be performed at a time. The classical case is recovered when there is in fact only one such test, and the quantum case when the tests correspond to the different possible measurements on a quantum system. A state is an assignment of probabilities to the outcomes of all the different possible tests. 

Assume further that given $n$ systems, a test for each system can be independently chosen, and that a joint state is an assignment of probabilities to outcome sequences for each possible sequence of tests. Given all this, it is possible to define exchangeability for a sequence of states and to prove a de Finetti representation theorem. These ideas are formalized in this section, with the representation theorem given in the next. 

\subsection{Single systems}
\label{Test:Single}

The technical notion that we use to describe a single system is that of a \emph{test space}. Test spaces were introduced with the explicit purpose of describing probabilistic models more general than classical and quantum theory, but including both as special cases \cite{Foulis:1972, Randall:1973}. 
\begin{Def}
A \emph{test space} consists of a pair $(E,S)$, with $E$ a set, and $S$ a set of countable subsets of $E$ that covers $E$. 
\end{Def}
The idea is that each element of $E$ is a possible outcome of a test. Each set $s\in S$ corresponds to a possible test, with the elements of $s$ being the outcomes of that test. The sets in $S$ may overlap, thus the definition of a test space is designed to allow for the possibility that outcomes of two different tests are identified. The sets $E$ and $S$ themselves may have any cardinality, but the definition stipulates that the outcomes of any particular test are countable.  

For finite $E$, a test space can be conveniently summarized by a Greechie diagram \cite{Greechie:1969}.  Each element of $E$ is represented by a circle, and tests are represented by connecting the corresponding set of circles with a continuous line.  Examples of Greechie diagrams are given in Figs. \ref{Greechie}-\ref{BoxGreechie}.

\begin{figure}
\includegraphics[scale=0.7]{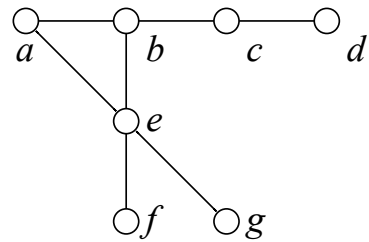}
\caption{A Greechie diagram of the test space $(\{a,b,c,d,e,f,g\},\{\{a,b,c,d\},\{a,e,g\},\{b,e,f\}\})$.\label{Greechie}}
\end{figure}
\begin{Def}\label{states}
A \emph{state} on a test space is a map $\omega: E\rightarrow [0,1]$ satisfying $\sum_{e\in s} \,\omega(e) = 1 \ \forall s\in S$.
\end{Def} 

A state defines probabilities for the outcomes of each test such that (i) these probabilities sum to $1$ for each test, and (ii) if an outcome appears in more than one test it gets the same probability in each case. Given a test space $\mathcal{A}$, write the set of all possible states $\Omega(\mathcal{A})$. Note that it is easy to construct test spaces for which $\Omega(\mathcal{A})$ is the empty set, or for which a particular outcome has probability $0$ in all states, or which may for similar reasons be judged unsatisfactory. Extra assumptions would rule these out, but here there is no need.   

The set of possible state spaces of a test space is generic, in the sense that every finite dimensional convex set arises as the set of states for some test space.\footnote{This statement extends to infinite dimensions. Precisely:
every convex subset of a locally convex topological vector space is affinely homeomorphic to the set of all states on some test space. In \cite{Shultz:1974} this was proved for the state space of an orthomodular lattice, but the set of all finite ortho-partitions of unity of the lattice is a test space that has the same state space.}   
As noted above, discrete classical probability theory is recovered when there is only one test, i.e., when $\mathcal{A}$ is of the form $(E,\{E\})$ (see Fig.~\ref{ClassGreechie}). 
\begin{figure}
\includegraphics[scale=0.7]{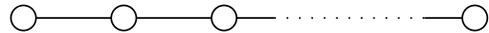}
\caption{Greechie diagram for classical probability theory over a finite set.\label{ClassGreechie}}
\end{figure}
It is also possible to recover classical probability theory over an arbitrary measurable set \cite{Wilce:2000}. Quantum theory with projective measurements is recovered when $\mathcal{A}=(\mathcal{P}(H),\mathcal{M}(H))$, with $\mathcal{P}(H)$ the set of projection operators on a Hilbert space $H$ and $\mathcal{M}(H)$ the set of projective decompositions of the identity. In this case, Gleason's theorem \cite{Gleason:1957} implies that each state corresponds to a density operator $\rho$, with the probability assigned to projector $P$ given by $\mathrm{Tr}(\rho P)$\footnote{This holds provided the dimension of the Hilbert space is $\geq 3$.  A generalization of Gleason's theorem to encompass positive operator valued (POV) measurements does hold for dimension 2 \cite{Busch:2003, Caves:2004}, but the formalism of test spaces is not general enough to encompass these measurements. This is because a POV decomposition of the identity can contain multiple instances of the same term, such as $\{I/2,I/2\}$, and states are constrained to assign the same probability to each instance. In the test space formalism, different outcomes of the same test are always considered distinct. For precisely this reason, generalizations of test spaces known as \emph{effect-test spaces} have been studied that do encompass quantum POV measurements \cite{Pulmannova:1995, Gudder:1997}. However, there is no real need to consider them here because we are primarily concerned with properties of states, and the set of possible state spaces of an effect-test space is no more general than that of a test space.}.  

It will be useful to define arbitrary linear combinations of states. Given states $\omega_1,\ldots,\omega_k \in \Omega(\mathcal{A})$, the linear combination $v = \sum_i \,r_i \omega_i$, for real $r_i$, is defined as a map $E\rightarrow \Re$ that satisfies
\begin{equation}
v(e) = \sum_i\, r_i \omega_i(e) \quad \forall e\in E.
\end{equation}
The set of all linear combinations of states is a vector space denoted $V(\mathcal{A})$. In the case of quantum theory, for example, $V(\mathcal{A})$ is the real vector space of Hermitian operators on the Hilbert space. Clearly, $\Omega(\mathcal{A})$ is a convex subset of $V(\mathcal{A})$. Since $\Omega(\mathcal{A})$ by definition spans $V(\mathcal{A})$, they have equal dimension. Importantly, from hereon we assume the following.

\begin{Assumption}
$V(\mathcal{A})$ is finite dimensional.
\end{Assumption}

Finally, let $V^*(\mathcal{A})$ be the vector space dual to $V(\mathcal{A})$, that is the set of all linear maps $V(\mathcal{A})\rightarrow \Re$. Note that each outcome of a test, that is each $e\in E$, can be uniquely identified with a map $\tilde{e}\in V^*(\mathcal{A})$ such that $\tilde{e}(\omega) =\omega(e)\ \forall \omega\in\Omega(\mathcal{A})$. Under this identification one can write $\omega(e)$ and $e(\omega)$ interchangeably, according to whether states are viewed as assigning probabilities to outcomes of tests, or vice versa. The set $E$ can be viewed as a subset of $V^*(\mathcal{A})$, and it is easy to see that the span of $E$ is equal to $V^*(\mathcal{A})$.

Before moving on to composite systems, we briefly note that there is an alternative approach to operational probabilistic theories that has recently been used to investigate the information theoretic properties of such theories \cite{Barnum:2006,Barnum:2006a,Barrett:2005aa}.  In this approach one starts with a compact convex set $\Omega$, to be interpreted as a space of states, and defines measurement outcomes to be the set of affine functionals $f:\Omega \rightarrow [0,1]$.  The present work could easily have been formulated in this framework, but it would be odd to do so from a subjective Bayesian point of view.  If the states are supposed to represent degrees of belief then it makes sense to start with the objects that they are degrees of belief about, i.e. the tests, rather than the states themselves.  There is no loss of generality in working with test spaces, since the result of \cite{Shultz:1974} implies that any compact convex set can arise as the state space of a test space in the finite dimensional case.

\subsection{Multi-partite systems}
\label{Test:Multi}

In order to consider multi-partite systems, one needs to consider the composition of test spaces and the definition of joint states. Suppose that two systems $A$ and $B$ are associated with test spaces $\mathcal{A}=(E,S)$ and $\mathcal{B}=(F,T)$. If the combined system is regarded as a system in and of itself, then it too should be associated with a test space. But how is this constructed, and how is it related to $\mathcal{A}$ and $\mathcal{B}$? Nothing that has been said with respect to single systems implies a unique answer, so further assumptions are needed.

Suppose that given separate systems $A$ and $B$, it is possible to perform any test $s$ on system $A$ simultaneously with any test $t$ on system $B$. A joint state assigns probabilities to pairs $(e,f)$ of outcomes. In particular, if $e\in s$ and $e\in s'$, then the probability of obtaining $(e,f)$ does not depend on whether the tests performed are $s$ and $t$, or $s'$ and $t$. Similarly if $f\in t$ and $f\in t'$. Suppose further that a specification of the joint probability for all outcome pairs serves to define the joint state uniquely.   

This motivates the following. 

\begin{Def}
Given two test spaces, $\mathcal{A}=(E,S)$ and $\mathcal{B}=(F,T)$, the \emph{Cartesian product}, $\mathcal{A}\times\mathcal{B}$, is a new test space whose set of outcomes is the set theoretic Cartesian product $E\times F$, and whose set of tests is $\{s\times t\,|s\in S,t\in T\}$, where $s\times t$ is again the set theoretic Cartesian product. 
\end{Def}

From Definition~\ref{states}, a state on $\mathcal{A}\times\mathcal{B}$ is a map $E\times F\rightarrow [0,1]$, with probabilities summing to $1$ for each pair of tests $(s,t)$.

\begin{Def}
Consider a bipartite system corresponding to a test space $\mathcal{A}\times \mathcal{B}$. A state $\omega\in\Omega(\mathcal{A}\times\mathcal{B})$ is \emph{nonsignalling} if and only if
\begin{align}
\sum_{f\in t}\ \omega(e,f) &= \sum_{f\in t'}\ \omega(e,f) \quad \forall e,t,t'\\
\sum_{e\in s}\ \omega(e,f) &= \sum_{e\in s'}\ \omega(e',f) \quad \forall f,s,s'.
\end{align}
\end{Def} 

If a state is nonsignalling, then the marginal probability of obtaining outcome $e$ for test $s$ does not depend on which $\mathcal{B}$ test is performed. This means that a marginal state $\omega_A\in\Omega(\mathcal{A})$, can be defined such that 
\begin{equation}
\omega_A(e) = \sum_{f\in t} \omega(e,f),
\end{equation}
where the right hand side does not depend on the choice of $t$. Similarly, one can define a marginal $\omega_B\in \Omega(\mathcal{B})$.

\begin{Def}
Given $\omega_A\in \Omega(\mathcal{A})$ and $\omega_B\in\Omega(\mathcal{B})$, the \emph{direct product}, $\omega_A\otimes\omega_B\in \Omega(\mathcal{A}\times\mathcal{B})$, is defined so that
\begin{equation}
(\omega_A\otimes\omega_B)(e,f) = \omega_A(e)\omega_B(f) \quad\forall (e,f)\in E\times F.
\end{equation}
\end{Def}

The definition of the Cartesian product of test spaces is valid for any pair of test spaces, including the case in which one of them is itself a Cartesian product. Considering three test spaces, $\mathcal{A}=(E,S)$, $\mathcal{B}=(F,T)$, and $\mathcal{C}=(G,U)$, it is easy to see that $(\mathcal{A}\times\mathcal{B})\times\mathcal{C}$ is isomorphic to $\mathcal{A}\times(\mathcal{B}\times\mathcal{C})$. Thus one can simply write $\mathcal{A}\times\mathcal{B}\times\mathcal{C}$, with states on $\mathcal{A}\times\mathcal{B}\times\mathcal{C}$ identified with maps $E\times F\times G \rightarrow [0,1]$. The product $\mathcal{A}\times\mathcal{A}\times\cdots\times\mathcal{A}$, where there are $n$ terms in the decomposition, can be written $\mathcal{A}^{\times n}$. 

The notion of a nonsignalling state has been defined with respect to bipartite decompositions. It extends readily to the case of an $n$-fold product. 
\begin{Def}\label{nonsigdef}
Consider a product of test spaces, $\mathcal{A}_1\times\cdots\times\mathcal{A}_n$. A state $\omega\in \Omega(\mathcal{A}_1\times\cdots\times\mathcal{A}_n)$ is \emph{n-fold nonsignalling} if and only if 
$\omega$ is nonsignalling with respect to every bipartite split. More formally, let $\alpha$ be a subset of $\{1,\ldots,n\}$, let $|\alpha|=k$, and write $\alpha=\{i_1,\ldots,i_k\}$. Then $\omega$ is n-fold nonsignalling iff
\begin{align*}
\sum_{e_{i_1}\in t_{i_1}}\cdots\sum_{e_{i_k}\in t_{i_k}} &\omega(e_1,\ldots,e_n) \\
&= \sum_{e_{i_1}\in t_{i_1}'}\cdots\sum_{e_{i_k}\in t_{i_k}'}\omega(e_1,\ldots,e_n),
\end{align*} 
for all $\alpha$, for all $e_j$ with $j\notin \alpha$, and for all tests $t_{i_1},\ldots,t_{i_k}$ and $t_{i_1}',\ldots,t_{i_k}'$.
\end{Def}

Finally, this will be useful:
\begin{Lemma}\label{tensorprodlemma}
Consider the test space $\mathcal{A}_1\times\cdots\times\mathcal{A}_n$. The direct product states, of the form $\omega_1\otimes\cdots\otimes\omega_n$, span a subspace of $V(\mathcal{A}_1\times\cdots\times\mathcal{A}_n)$, and the subspace can be identified with $V(\mathcal{A}_1)\otimes\cdots\otimes V(\mathcal{A}_n)$. If a joint state $\omega$ is n-fold nonsignalling, then $\omega \in V(\mathcal{A}_1)\otimes\cdots\otimes V(\mathcal{A}_n)$, i.e., $\omega$ can be written as a linear combination of direct products.
\end{Lemma}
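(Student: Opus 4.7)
My plan is to prove the two assertions in succession. For the identification, I would define the natural linear map $\Phi: V(\mathcal{A}_1)\otimes\cdots\otimes V(\mathcal{A}_n) \to V(\mathcal{A}_1\times\cdots\times\mathcal{A}_n)$ by extending $v_1\otimes\cdots\otimes v_n \mapsto\bigl((e_1,\ldots,e_n)\mapsto v_1(e_1)\cdots v_n(e_n)\bigr)$; this is well defined by multilinearity, maps direct products of states to direct product states, and so has image equal to the span of direct product states. To conclude that $\Phi$ is an isomorphism onto that span I must verify injectivity: choosing bases $\{u^{(i)}_{a}\}_{a}$ of each $V(\mathcal{A}_i)$, this reduces to showing that the separable functions $(e_1,\ldots,e_n)\mapsto u^{(1)}_{a_1}(e_1)\cdots u^{(n)}_{a_n}(e_n)$ are linearly independent on $E_1\times\cdots\times E_n$, which follows by eliminating one factor at a time using the earlier observation that $\mathrm{span}(E_i)=V^*(\mathcal{A}_i)$.

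For the nonsignalling statement, the key observation is that fixing outcomes in all but one factor produces, as a function of the free variable, an element of the corresponding single-system vector space. Concretely, if $\omega$ is $n$-fold nonsignalling then Definition~\ref{nonsigdef} applied with $\alpha=\{i\}$ shows that $\sum_{e_i\in s_i}\omega(e_1,\ldots,e_n)$ is independent of $s_i$ for any fixed values of the other $e_j$; combined with $\omega\geq 0$ this forces $e_i\mapsto\omega(e_1,\ldots,e_n)$ to be a non-negative scalar multiple of a state on $\mathcal{A}_i$, hence to lie in $V(\mathcal{A}_i)$. For each $i$, I pick outcomes $e^{(i)}_1,\ldots,e^{(i)}_{k_i}\in E_i$ whose functionals form a basis of $V^*(\mathcal{A}_i)$ (possible since $E_i$ spans it and $V(\mathcal{A}_i)$ is finite dimensional), and let $u^{(i)}_{1},\ldots,u^{(i)}_{k_i}\in V(\mathcal{A}_i)$ be the dual basis. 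Expanding the slice function on $E_1$ in this basis and reading off coefficients by duality gives
\[
\omega(e_1,\ldots,e_n) = \sum_{j_1} u^{(1)}_{j_1}(e_1)\,\omega\bigl(e^{(1)}_{j_1},e_2,\ldots,e_n\bigr).
\]
Iterating the same step on each remaining factor, using the $\alpha=\{i\}$ nonsignalling condition again with $e^{(1)}_{j_1},\ldots,e^{(i-1)}_{j_{i-1}},e_{i+1},\ldots,e_n$ playing the role of the fixed outcomes, produces
\[
\omega(e_1,\ldots,e_n) = \sum_{j_1,\ldots,j_n} \omega\bigl(e^{(1)}_{j_1},\ldots,e^{(n)}_{j_n}\bigr)\, u^{(1)}_{j_1}(e_1)\cdots u^{(n)}_{j_n}(e_n),
\]
which exhibits $\omega$ as an element of $V(\mathcal{A}_1)\otimes\cdots\otimes V(\mathcal{A}_n)$ under the identification from the first part.

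The main obstacle is the step that turns a ``slice'' of $\omega$ into an element of a single-system vector space: it is not enough that $\omega$ is nonsignalling as a global joint state, one needs the residual function on $E_i$ to retain test-independent marginals even after arbitrary outcomes are held fixed in the remaining factors. This is precisely why the full $n$-fold nonsignalling condition, rather than merely a weaker ``$\{1,\ldots,n-1\}$ vs.\ $\{n\}$'' bipartite condition, is invoked; it is also the feature that makes the iteration go through cleanly and, incidentally, the feature that a nonlocal theory without full $n$-fold nonsignalling might lack.
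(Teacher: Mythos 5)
Your proof is correct and follows essentially the same route as the paper's: the key step in both is that the ``slice'' of an $n$-fold nonsignalling state obtained by fixing all but one outcome is a non-negative multiple of a state on the remaining factor (the paper phrases this via the conditional state $\omega_{2|e}$), so that $\omega$ respects the linear relations among outcomes and hence defines an element of $V(\mathcal{A}_1)\otimes\cdots\otimes V(\mathcal{A}_n)$. Your version is somewhat more explicit than the paper's --- you construct the tensor decomposition concretely from dual bases and carry out the iteration over factors that the paper dismisses as ``straightforward'' --- but the underlying argument is the same.
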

\begin{proof}
Begin with the $n=2$ case. The tensor product $V(\mathcal{A}_1)\otimes V(\mathcal{A}_2)$ can be defined as the set of bilinear maps $V^*(\mathcal{A}_1)\times V^*(\mathcal{A}_2)\rightarrow \Re$. Any direct product $\omega_1\otimes\omega_2$ defines such a map via $(a,b)\rightarrow a(\omega_1)b(\omega_2)$, and it is straightforward that these span $V(\mathcal{A}_1)\otimes V(\mathcal{A}_2)$. Now consider a nonsignalling joint state $\omega$. The fact that $\omega$ is nonsignalling permits the definition of the marginal state $\omega_1$. Define the conditional state $\omega_{2|e}$ such that $\omega_{2|e}(f)$ is the probability of outcome $f$ on system $2$, given that outcome $e$ was obtained on system $1$. Thus $\omega(e,f) = \omega_1(e)\omega_{2|e}(f)$, for all $(e,f)\in E\times F$. Note that $\omega_{2|e}\in \Omega(\mathcal{A}_2)$. Now suppose that $f$ and $\{g_i\}$ are elements of $F$ such that, considered as elements of $V^*(\mathcal{A}_2)$, $f=\sum_i r_i g_i$. Thus $\omega_2(f) = \sum_i r_i \omega_2(g_i)$ for all $\omega_2\in \Omega(\mathcal{A}_2)$. Then $\omega(e,f) = \omega_1(e)\omega_{2|e}(f)=\sum_i r_i \omega_1(e)\omega_{2|e}(g_i)=\sum_i r_i\omega(e,g_i)$. A value $\omega(e,b)$ can now be defined for arbitrary $b\in V^*(\mathcal{A}_2)$ by linear extension. Similar reasoning concludes that $\omega(e,b)$ is linear in the first argument, hence can be extended to $\omega(a,b)$ for $a\in V^*(\mathcal{A}_1)$. So $\omega$ defines a bilinear map $V^*(\mathcal{A}_1)\times V^*(\mathcal{A}_2)\rightarrow \Re$ as required. The extension to general $n$ is straightforward. 
\end{proof}

Note that if we have two quantum test spaces $\mathcal{A} = (\mathcal{P}(H_A),\mathcal{M}(H_B)) $ and $\mathcal{B}=(\mathcal{P}(H_B),\mathcal{M}(H_B))$ then the space of nonsignalling states on $\mathcal{A} \times \mathcal{B}$ is larger than the state space of $\mathcal{AB} = (\mathcal{P}(H_A \otimes H_B),\mathcal{M}(H_A \otimes H_B))$.  To see this, recall that the nonsignalling state space of $\mathcal{A} \times \mathcal{B}$ only has to be positive for all possible choices of local measurements on $\mathcal{A}$ and $\mathcal{B}$, whereas the test space  $\mathcal{AB}$ includes joint measurements, such as the Bell measurement for example.  Thus, the criteria to be a state on $\mathcal{AB}$ are more restrictive than those for nonsignalling states on $\mathcal{A} \times \mathcal{B}$.  Indeed, if we take a state on $\mathcal{AB}$ and perform a positive, but not completely positive, map on system $\mathcal{A}$ then the result is still a valid state on $\mathcal{A} \times \mathcal{B}$, but not on $\mathcal{AB}$ in general, e.g. consider performing a partial transpose on a Bell state.  Nevertheless, the state space of $\mathcal{AB}$ is still a convex subset of the nonsignalling states on $\mathcal{A} \times \mathcal{B}$, which is enough to apply our theorem.  More generally,  one might want to consider rules for composing subsystems that yield a convex subset of the nonsignalling states on the Cartesian product for arbitrary test spaces.

\section{A de Finetti theorem for test spaces}
\label{MainRes}

Given a system associated with a test space $\mathcal{A}=(E,S)$, suppose that $n$ copies are associated with the product $\mathcal{A}^{\times n}$. Given an infinite sequence of states $\omega^1,\omega^2,\ldots$ where $\omega^n\in\Omega(\mathcal{A}^{\times n})$, it is possible to define symmetry and exchangeability in a manner similar to the classical and quantum cases. The main difference is that here, the definition of exchangeability involves the extra condition that the states are nonsignalling.

\begin{Def}\label{symmetricdef}
A state $\omega^n\in\Omega(\mathcal{A}^{\times n})$ is \emph{symmetric} if and only if it is invariant under permutations of the $n$ systems. That is, 
\[
\omega^n(e_1,\ldots,e_n) = \omega^n(e_{\pi(1)},\ldots,e_{\pi(n)}),
\]
for all permutations $\pi:(1,\ldots,n)\rightarrow (1,\ldots,n)$.
\end{Def}

\begin{Def}\label{exchangedef}
A sequence of states $\omega^1,\omega^2,\ldots$ where $\omega^n\in\Omega(\mathcal{A}^{\times n})$, is \emph{exchangeable} if and only if
\begin{enumerate}
\item $\forall n$ $\omega^n$ is symmetric,
\item $\forall n$ $\omega^n$ is n-fold nonsignalling,\
\item $\omega^n(e_1,\ldots,e_n) = \sum_{e_{n+1}\in s} \,\omega^{n+1}(e_1,\ldots,e_n,e_{n+1})$.
\end{enumerate}
\end{Def}

\begin{Theorem}[The de Finetti theorem for test spaces]\label{testdefinetti}
Suppose that the sequence $\omega^1,\omega^2,\ldots$ where $\omega^n\in\Omega(\mathcal{A}^{\times n})$, is exchangeable. Then $\omega^n$ can be written in the form
\begin{equation}
\label{MainRes:DFF}
\omega^n = \int_{\Omega(\mathcal{A})} \!\mathrm{d}\mu(\omega) \ \omega\otimes\cdots\otimes\omega,
\end{equation}
where $\mu$ is a probability measure on $\Omega(\mathcal{A})$, $\mu$ is independent of $n$, and $\mu$ is unique.
\end{Theorem}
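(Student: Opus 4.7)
The plan is to reduce the theorem to the classical de Finetti theorem (Theorem~\ref{classicaldefinetti}) by exploiting the finite-dimensionality of $V(\mathcal{A})$ together with the nonsignalling condition. Because each $\omega^n$ is $n$-fold nonsignalling, Lemma~\ref{tensorprodlemma} identifies it with an element of $V(\mathcal{A})^{\otimes n}$, so that $\omega^n$ is completely determined by its values $\omega^n(e_1,\ldots,e_n)$ on any finite family of outcomes spanning $V^*(\mathcal{A})$. The goal is thus to produce a measure $\mu$ on $\Omega(\mathcal{A})$ whose product moments reproduce these numbers.

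First I would fix a single test $s\in S$ and observe that the restricted probabilities $P^n_s(e_1,\ldots,e_n):=\omega^n(e_1,\ldots,e_n)$ for $e_i\in s$ form an exchangeable classical sequence; applying Theorem~\ref{classicaldefinetti} yields a unique measure $\mu_s$ on $\Delta_s$ representing $P^n_s$ as a mixture of product distributions. Using the symmetry of $\omega^n$ under arbitrary permutations of its arguments together with nonsignalling, the same logic applies when different tests $s_1,\ldots,s_n$ are performed on the $n$ systems: for any finite family of tests the joint outcome probabilities are unambiguously defined by $\omega^n$, invariant under simultaneous permutations of tests and systems, and consistent under marginalization of any subset of systems with an arbitrary choice of unperformed tests. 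I would then apply classical de Finetti to ``mixed-test'' setups --- for example performing $s$ on $n/2$ systems and $s'$ on the other $n/2$ and letting $n\to\infty$ --- to produce compatible joint measures $\mu_{s_1,\ldots,s_k}$ on $\Delta_{s_1}\times\cdots\times\Delta_{s_k}$ whose marginals recover the individual $\mu_{s_i}$.

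To assemble these into a single measure $\mu$ on $\Omega(\mathcal{A})$, I would use finite-dimensionality of $V(\mathcal{A})$ to pick a finite collection of tests $s_1,\ldots,s_k$ whose outcomes span $V^*(\mathcal{A})$; the induced map $\Omega(\mathcal{A})\to \Delta_{s_1}\times\cdots\times\Delta_{s_k}$ is then continuous and injective with closed image, and nonsignalling forces $\mu_{s_1,\ldots,s_k}$ to be supported on this image, yielding $\mu$. Verifying $\omega^n=\int d\mu(\omega)\,\omega^{\otimes n}$ then reduces to checking the identity against a basis of $V^*(\mathcal{A})^{\otimes n}$, which is exactly what the construction ensures. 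Uniqueness of $\mu$ comes from a Stone--Weierstrass argument: the product functionals $\omega\mapsto \omega(e_{i_1})\cdots\omega(e_{i_n})$ separate points of the compact convex set $\Omega(\mathcal{A})$, so their linear span is dense in $C(\Omega(\mathcal{A}))$, forcing any two representing measures to coincide.

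The hard part will be the gluing step. Nonsignalling is precisely what one needs for the family $\{\mu_s\}_{s\in S}$ to arise from a single measure on $\Omega(\mathcal{A})$, since it guarantees marginal consistency of the classical data on outcomes shared by overlapping tests; finite-dimensionality of $V(\mathcal{A})$ is what makes the gluing tractable, reducing what could be an infinite consistency problem to a finite one, and it is also what lets us work throughout with compact convex sets and appeal to a Riesz-style uniqueness statement for the final measure.
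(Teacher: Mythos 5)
Your overall strategy---reduce to Theorem~\ref{classicaldefinetti} using finite dimensionality and nonsignalling, then get uniqueness from Stone--Weierstrass---is sound, and both your single-test step and your uniqueness argument are correct. But the step you yourself flag as ``the hard part,'' the gluing, is where the real content of the theorem lives, and as written it contains two genuine gaps. First, the ``mixed-test'' setups to which you propose to apply the classical theorem are not exchangeable classical sequences: if $s$ is performed on half the systems and $s'$ on the other half, the resulting distribution over outcome sequences is invariant only under permutations that preserve the test assignment, so Theorem~\ref{classicaldefinetti} does not apply to it. To manufacture a genuinely exchangeable classical sequence out of several tests you need an extra device---for instance a compound test in which each system independently receives a uniformly random choice from $\{s_1,\dots,s_k\}$ and the pair of test label and outcome is recorded, followed by an argument that the resulting de Finetti measure factorizes through tuples $(p_1,\dots,p_k)\in\Delta_{s_1}\times\cdots\times\Delta_{s_k}$. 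Second, the assertion that nonsignalling forces $\mu_{s_1,\ldots,s_k}$ to be supported on the image of $\Omega(\mathcal{A})$ is precisely the claim that needs proof. A tuple lies in that image only if (a) it respects every linear relation holding among the outcomes of $s_1,\dots,s_k$ inside $V^*(\mathcal{A})$, and (b) the unique vector of $V(\mathcal{A})$ it then determines is nonnegative on the outcomes of \emph{every other} test in $S$. Neither follows from your construction without further work: (a) requires a second-moment argument (using $\omega^2$ and the multilinearity supplied by Lemma~\ref{tensorprodlemma} to show that $p_i(e)-\sum_j r_j p_{i'}(f_j)$ has vanishing variance under the joint measure), and (b) requires enlarging the family of tests and then a countability/Lindel\"of argument to pass from ``almost surely for each test'' to ``almost surely for all tests.''

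For comparison, the paper sidesteps the gluing problem entirely: Lemma~\ref{ICLem} produces a single abstract informationally complete ``measurement'' $\{a_1,\dots,a_d\}$, a basis of $V^*(\mathcal{A})$ normalized so that it behaves like a $d$-outcome test on all states, so that one application of Theorem~\ref{classicaldefinetti} to $P^n(i_1,\dots,i_n)=(a_{i_1}\otimes\cdots\otimes a_{i_n})(\omega^n)$ already determines $\omega^n$ as a mixture of products of vectors $\omega_p\in V(\mathcal{A})$. The support problem then reappears in a different guise (some $\omega_p$ may be negative on a genuine outcome $e$) and is settled by a concentration argument: if $\mu$ charged a neighborhood on which $\sum_i\omega(f_i)>1+\epsilon$ for the remaining outcomes $f_i$ of a test containing $e$, the probability of never obtaining $e$ in $n$ repetitions of that test would exceed $1$ for large even $n$. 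Some argument of this kind---either the paper's, or the moment and countability arguments sketched above---is unavoidable, and your proposal does not yet contain one.
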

\begin{proof}
The proof is an adaptation of the proof of the quantum de Finetti theorem due to Caves et. al. \cite{Caves:2002aa}. Recall that in quantum theory, an informationally complete measurement is a positive operator-valued (POV) measurement such that if the outcome probabilities are all known, then the state is determined uniquely. The strategy of Caves et al. is to generate a classical distribution by considering an informationally complete POV measurement performed separately on each quantum system. Applying the classical de Finetti theorem to the distribution of outcome sequences allows the form of the quantum state to be inferred.

In our context, the test space $\mathcal{A}$ need not include a test that is informationally complete for the state space $\Omega$. But for the purposes of proof, this does not matter. All that is needed is a corresponding mathematical construction. 

\begin{Lemma}
\label{ICLem}
There exists a set $M = \{a_1,\ldots,a_d\}$, where $a_i\in V^*(\mathcal{A})$ and $d$ is the dimension of $V^*(\mathcal{A})$, such that 
\begin{enumerate}
\item $M$ is a basis for $V^*(\mathcal{A})$, i.e., the $a_i$ are linearly independent,
\item $0\leq a_i(\omega) \leq 1 \ \forall \omega\in\Omega(\mathcal{A})$,
\item $\sum_i a_i(\omega)=1 \ \forall \omega\in\Omega(\mathcal{A})$.
\end{enumerate}
\end{Lemma}
\begin{proof}
This result is not new. It is also used in Ref.~\cite{Barnum:2006}, and we give the same proof. A more general version is proven in Ref.~\cite{Singer:1992}. Let $u$ be the unique vector in $V^*(\mathcal{A})$ such that $u(\omega)=1$ for all $\omega\in\Omega(\mathcal{A})$. Consider an arbitrary basis $\{b_1,\ldots,b_d\}$ of $V^*(\mathcal{A})$. Apply an invertible linear transformation to obtain $\{\tilde{b}_1,\ldots,\tilde{b}_d\}$, where $\sum_i \tilde{b}_i = u$. Some of the $\tilde{b}_i$ may be negative on some states $\omega$. Define a constant $c$ as the minimum value of $\tilde{b}_i(\omega)$, where the minimum is taken over all $i$ and all $\omega\in\Omega(\mathcal{A})$. Then $\tilde{b}_i - cu$ is positive and non-zero. It follows that $\sum_i (\tilde{b}_i-cu)=(1-dc)u$, with $1-dc>0$. Define $a_i = (\tilde{b}_i-cu)/(1-dc)$.  
\end{proof}

The set $M$ will play the role of an informationally complete measurement. The linear independence of the $a_i$ means that a state $\omega$ is determined uniquely by the values $a_i(\omega)$. The idea now is that given a nonsignalling $\omega^n\in \Omega(\mathcal{A}^{\times n})$, one can at least imagine a measurement corresponding to $M$ performed separately on each system. The probability of obtaining an outcome sequence $(a_{i_1},\ldots,a_{i_n})$ for a direct product state $\omega_1\otimes\cdots\otimes\omega_n$ is defined as $a_{i_1}(\omega_1)\cdots a_{i_n}(\omega_n)$. Recalling Lemma~\ref{tensorprodlemma}, according to which a nonsignalling state can be written as a linear combination of direct product states, the probability of the sequence $(a_{i_1},\ldots,a_{i_n})$ for an arbitrary nonsignalling state can be defined by linear extension. Note that this way, the sequence $(a_{i_1},\ldots,a_{i_n})$ corresponds to a vector $a_{i_1}\otimes\cdots\otimes a_{i_n}\in \left(V^*(\mathcal{A})\right)^{\otimes n}$, such that the probability is given by $(a_{i_1}\otimes\cdots\otimes a_{i_n})(\omega^n)$. The vectors $a_{i_1}\otimes\cdots\otimes a_{i_n}$ are linearly independent and span the tensor product space $\left(V^*(\mathcal{A})\right)^{\otimes n}$. This means that the joint measurement $M^{\times n}$ is informationally complete for the nonsignalling $n$-partite system.

For simplicity, write the probability of the outcome sequence $(a_{i_1},\ldots,a_{i_n})$ as $P^n(i_1,\ldots,i_n)$ so that
\begin{equation}
P^n(i_1,\ldots,i_n) = (a_{i_1}\otimes\cdots\otimes a_{i_n})(\omega^n).
\end{equation}
If $\omega^n$ is symmetric, then so is $P^n$, since
\begin{align}
P^n(i_{\pi(1)},\ldots,i_{\pi(n)}) &= (a_{i_{\pi(1)}}\otimes\cdots\otimes a_{i_{\pi(n)}})(\omega^n) \nonumber \\
&= (a_{i_1}\otimes\cdots\otimes a_{i_n})(\Pi(\omega^n)) \nonumber \\
&= (a_{i_1}\otimes\cdots\otimes a_{i_n})(\omega^n) \nonumber \\
&= P^n(i_1,\ldots,i_n),
\end{align}
for any permutation $\pi$.

Further, if the sequence $\omega^1,\omega^2,\ldots$ is exchangeable, then so is the sequence $P^1,P^2,\ldots$, since it is symmetric and
\begin{align}
\sum_{i_{n+1}}P^{n+1}&(i_1,\ldots,i_n,i_{n+1}) \nonumber\\
&= \sum_{i_{n+1}} (a_{i_1}\otimes\cdots\otimes a_{i_n}\otimes a_{i_{n+1}})(\omega^{n+1})  \nonumber \\
&= (a_{i_1}\otimes\cdots\otimes a_{i_n})(\omega^n) \nonumber \\
&= P^n(i_1,\ldots,i_n).
\end{align}
Now apply the classical de Finetti theorem (Theorem \ref{classicaldefinetti}) to the sequence $P^1,P^2,\ldots$ to obtain
\begin{equation}\label{classexpression}
P^n(i_1,\ldots,i_n) = \int_{\Delta_d} \!\mathrm{d}\mu(p) \ p(a_{i_1}) \cdots p(a_{i_n}),
\end{equation}
where $\Delta_d$ is the set of probability distributions over $\{a_1,\ldots,a_d\}$. 

For each such distribution $p$, there is a unique $\omega_p\in V(\mathcal{A})$ such that $a_i(\omega_p)=p(a_i)$ (where uniqueness follows from the linear independence of the $a_i$). Thus Eq.(\ref{classexpression}) can be rewritten
\begin{equation}
(a_{i_1}\otimes\cdots\otimes a_{i_n})(\omega^n) = \int_{\Delta_d} \!\mathrm{d}\mu(p) \ a_{i_1}(\omega_p)\cdots a_{i_n}(\omega_p).
\end{equation}
Since the joint measurement is informationally complete for the nonsignalling $n$-partite states, this implies
\begin{equation}\label{notquitethere}
\omega^n = \int_{\Delta_d} \!\mathrm{d}\mu(p) \ \omega_p\otimes\cdots\otimes\omega_p.
\end{equation}

This is not quite sufficient to establish Theorem~\ref{testdefinetti}. The right hand side of Eq.(\ref{notquitethere}) is an integral over all $\omega_p\in V(\mathcal{A})$ satisfying $0\leq a_i(\omega_p)\leq 1$ and $\sum_i a_i(\omega_p)=1$. That is, it is an integral over all $\omega_p$ that return sensible probabilities for the imaginary informationally complete measurement. But in general, there are $\omega_p$ satisfying these conditions that are not valid states because they return a value $<0$ for some element of the test space. It remains to show that the integral can be restricted to those $\omega_p\in\Omega(\mathcal{A})$. 

To this end, consider an $\omega_p\in V(\mathcal{A})$ such that $0\leq a_i(\omega_p)\leq 1$ and $\sum_i a_i(\omega_p)=1$, but $\omega_p(e)<0$ for some $e\in E$. It must be the case that $e\in s$ for some test $s$. Let the other elements of $s$ be $\{f_1,\ldots,f_k\}$ and note that $\sum_i \omega_p(f_i)>1$. There must exist an $\epsilon>0$ and a neighborhood $N$ of $\omega_p$ in $V(\mathcal{A})$ such that $\sum_i\omega(f_i)>1+\epsilon$ for all $\omega\in N$. Let $\tilde{N}$ be the subset of $\Delta_d$ such that $p(a_i)=a_i(\omega)$ for some $\omega\in N$. The expression (\ref{notquitethere}) holds for any $n$, so suppose that $n$ is even and that the test $s$ is performed on each of the $n$ systems. The probability that outcome $e$ is never obtained is defined by $\omega^n$ and is given by
\begin{align}
& \int_{\Delta_d} \!\mathrm{d}\mu(p) \ \left(\sum_i \omega_p(f_i) \right)^n \nonumber \\
& = \int_{\Delta_d \backslash \tilde{N}} \!\mathrm{d}\mu(p)\ \left(\sum_i \omega_p(f_i) \right)^n + \int_{\tilde{N}} \!\mathrm{d}\mu(p)\ \left(\sum_i \omega_p(f_i) \right)^n \nonumber \\
& \geq \int_{\tilde{N}} \!\mathrm{d}\mu(p)\ \left(\sum_i \omega_p(f_i) \right)^n \nonumber \\
& \geq (1+\epsilon)^n \int_{\tilde{N}}\!\mathrm{d}\mu(p),
\end{align}
where we have used the fact that the first term in the second line is $\geq 0$ if $n$ is even.     
For large enough $n$ this expression is $>1$ unless $\mu(\tilde{N})=0$. This holds for any $\omega_p\notin \Omega(\mathcal{A})$. It follows that, with a suitable redefinition of $\mu$, 
\begin{equation}
\omega^n = \int_{\Omega(\mathcal{A})} \!\mathrm{d}\mu(\omega) \ \omega\otimes\cdots\otimes\omega.
\end{equation}
\end{proof}

\section{Consequences of Theorem \ref{testdefinetti}}

\label{Cons}

The de Finetti theorem for test spaces is rather general. Very little goes into the definition of a test space itself.  In fact, in the finite dimensional case we are considering, the state space of an individual system may be an arbitrary compact convex set.  Thus the most substantive assumptions that go into the theorem are those that concern how test spaces combine when joint systems are considered.  

In the classical case, the test space $(E,\{E\})$ contains a single test and the state space for an individual system is $\Delta_d$, where $d$ is the number of elements of $E$.  In this case, the nonsignalling condition is redundant because there is just a single test on each system and hence no freedom to choose alternative measurements.  The Cartesian product of classical test spaces corresponds to the usual Cartesian product of sample spaces and so Theorem~\ref{testdefinetti} reduces to Theorem~\ref{classicaldefinetti} straightforwardly. The quantum case is a little more subtle because the state of $n$ quantum systems belongs to the state space of $(\mathcal{P}(H^{\otimes n}),\mathcal{M}(H^{\otimes n}))$, rather than the state space of the Cartesian product of $n$ test spaces of the form $(\mathcal{P}(H),\mathcal{M}(H))$. Nevertheless, a quantum state defined by a density operator on $H^{\otimes n}$ is uniquely specified by the probabilities for measurement outcomes of the form $Q_1\otimes Q_2\otimes\cdots\otimes Q_n$, where the $Q_i$ are projection operators. It follows that the states on $(\mathcal{P}(H^{\otimes n}),\mathcal{M}(H^{\otimes n}))$ can be identified with a subset of the nonsignalling states on $(\mathcal{P}(H),\mathcal{M}(H))^{\times n}$. Hence Theorem~\ref{testdefinetti} implies the quantum de Finetti theorem. 

As a further illustration of the generality of the result, we note that a theorem for classical processes, or conditional probabilities, can also be viewed as a special case of Theorem~\ref{testdefinetti}. A process $C$ can be thought of as taking an input $Y$ into an output $X$, where $Y$ takes values in $\{1,\ldots,k\}$ and $X$ takes values in $\{1,\ldots, d\}$. The process can be defined as the set of conditional probabilities of the form $P(X = x|Y = y)$ (abbreviated $P(x|y)$). Clearly, the set of all such $C$ can be regarded as the set of states on a test space $(E,S)$, where $E$ consists of ordered pairs with $E = \{(x,y)\}_{x=1,\ldots, d, y = 1,\ldots, k}$ and $S = \{\{(x,1)\}_{x=1}^d,\dots,\{(x,k)\}_{x=1}^d\}$, as illustrated for the $d = k = 3$ case in Fig.~\ref{BoxGreechie}.  
\begin{figure}
\includegraphics[scale=0.7]{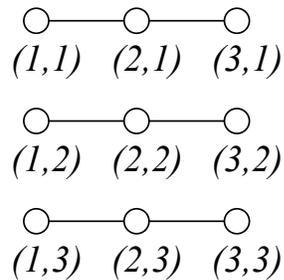}
\caption{The test space for a process with three inputs and three outputs.\label{BoxGreechie}}
\end{figure}
The state space of this test space is isomorphic to the set of conditional probability distributions via the identification $\omega((x,y)) = P(x|y)$.

More generally, a process $C^n$ takes inputs $Y_1,\ldots,Y_n$ into outputs $X_1,\ldots,X_n$, where each $Y_i$ takes values in $\{1,\ldots,k\}$, and each $X_i$ takes values in $\{1,\ldots, d\}$. Such a $C^n$ can be defined as the set of conditional probabilities of the form $P^n(x_1,\ldots,x_n|y_1,\ldots,y_n)$, and can also be identified with a state on the Cartesian product $(E,S)^{\times n}$. The n-fold nonsignalling and symmetry conditions can be defined for $C^n$ exactly as they are in Definitions~\ref{nonsigdef} and \ref{symmetricdef}. Exchangeability for a sequence $C^1,C^2,\ldots$ can be defined exactly as in Definition~\ref{exchangedef}. Theorem~\ref{testdefinetti} becomes the following theorem for classical processes.
\begin{Theorem}\label{conddefinetti}
If the sequence $C^1,C^2,\ldots$ is exchangeable, then the conditional probabilities defining $C^n$ can be written in the form
\begin{equation}\label{condindep}
P(x_1,\ldots,x_n|y_1,\ldots,y_n) = \int_{\Omega}\!\mathrm{d}\mu(p) \ p(x_1|y_1)\cdots p(x_n|y_n),
\end{equation}
where $\Omega$ is the set of processes that take a single input $X$ into a single output $Y$, $\mu$ is a probability measure on $\Omega$, $\mu$ is independent of $n$, and $\mu$ is unique. 
\end{Theorem}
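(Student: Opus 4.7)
The plan is to derive Theorem~\ref{conddefinetti} as a direct corollary of Theorem~\ref{testdefinetti} under the identification already set up in the preceding paragraph. Specifically, I would regard each conditional distribution $P^n(x_1,\ldots,x_n|y_1,\ldots,y_n)$ as a state $\omega^n\in\Omega((E,S)^{\times n})$ via
\[
\omega^n((x_1,y_1),\ldots,(x_n,y_n)) := P^n(x_1,\ldots,x_n|y_1,\ldots,y_n),
\]
and then check that the three conditions of Definition~\ref{exchangedef} hold for the sequence $\omega^1,\omega^2,\ldots$ precisely when $C^1,C^2,\ldots$ is exchangeable in the natural sense for processes.

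The verification is routine but worth laying out. Symmetry in the sense of Definition~\ref{symmetricdef} becomes invariance of $\omega^n$ under simultaneous permutations of the labelled pairs $(x_i,y_i)$, which is the symmetry condition one naturally imposes on $C^n$. The n-fold nonsignalling condition of Definition~\ref{nonsigdef}, when spelled out on $(E,S)^{\times n}$ whose tests are indexed by input strings, reduces to the statement that for any $\alpha\subseteq\{1,\ldots,n\}$ the marginal of $P^n$ over the outputs $x_j$ with $j\in\alpha$ is independent of the inputs $y_j$ with $j\in\alpha$, i.e.\ the familiar no-signalling property of a multipartite conditional distribution. Finally, the extension condition of Definition~\ref{exchangedef} translates to $\sum_{x_{n+1}}P^{n+1}(x_1,\ldots,x_n,x_{n+1}|y_1,\ldots,y_n,y_{n+1})=P^n(x_1,\ldots,x_n|y_1,\ldots,y_n)$, holding for every choice of $y_{n+1}$, which is the natural compatibility between $C^{n+1}$ and $C^n$.

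With those identifications in place, Theorem~\ref{testdefinetti} yields a unique probability measure $\mu$ on $\Omega((E,S))$, independent of $n$, such that $\omega^n=\int d\mu(\omega)\,\omega\otimes\cdots\otimes\omega$. Unpacking the direct product of states on the Cartesian product of test spaces, and noting that a state on $(E,S)$ is just a single-shot conditional $p(x|y)$ via $\omega((x,y))=p(x|y)$, one obtains Eq.~(\ref{condindep}) immediately. There is no separate technical obstacle in this proof: the substantive content is the observation that the test space $(E,S)$ has been chosen so that its tests are precisely the possible input values, which is exactly what makes the test-space notion of nonsignalling coincide with the operational no-signalling condition for conditional distributions. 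Everything else is bookkeeping.
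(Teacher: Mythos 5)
Your proposal is correct and takes essentially the same route as the paper: the paper likewise identifies $C^n$ with a state on $(E,S)^{\times n}$ via $\omega^n((x_1,y_1),\ldots,(x_n,y_n))=P^n(x_1,\ldots,x_n|y_1,\ldots,y_n)$, defines exchangeability of $C^1,C^2,\ldots$ by exactly the test-space conditions of Definitions~\ref{symmetricdef}--\ref{exchangedef}, and reads Theorem~\ref{conddefinetti} off as an instance of Theorem~\ref{testdefinetti}. Your closing observation---that the non-overlapping tests make $\Omega((E,S))$ the full set of conditionals $p(x|y)$, so the integral legitimately ranges over all processes---is precisely the point the paper emphasizes in its first remark following the theorem.
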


A couple of remarks regarding Theorem~\ref{conddefinetti} might be helpful.

First, we described Theorem~\ref{conddefinetti} as a special case of Theorem~\ref{testdefinetti}. It is worth noting the sense in which it is strictly less general. After all, any state on any test space could be thought of as a classical process, which takes an input (choice of test) into an output (outcome of test). The point is that, with this identification applied to a generic test space $\mathcal{A}$, not all sets of conditional probabilities $p(x|y)$ will correspond to valid states on $\mathcal{A}$. It is only if $\mathcal{A}$ has the special feature that the tests are non-overlapping that this will be the case. The integral in Eq.(\ref{condindep}) ranges over all processes of the form $p(x|y)$, whereas in Eq.(\ref{MainRes:DFF}), it is important that the integral ranges only over $\Omega(\mathcal{A})$. 

Second, the de Finetti theorem for classical processes can be viewed as a de Finetti theorem for \emph{states} in a theory that is non-classical and non-quantum, but admits the most general type of correlations compatible with the nonsignalling requirement.  This theory is discussed in \cite{Barrett:2005aa}, where it is called Generalized Non-signalling Theory. It admits superquantum correlations, which have been discussed in the quantum information literature under the name Popescu-Rohrlich, or nonlocal, boxes.  Of course, as in the quantum case, exchangeable states do not actually exhibit such correlations, since the de Finetti theorem shows that they are separable.

\section{When does a de Finetti-type theorem not hold?}
\label{Not}

The de Finetti theorem for test spaces holds thanks to a number of assumptions concerning how systems combine to make joint systems. One of these is the nonsignalling condition for joint states. Others are encoded in the formal definition of a Cartesian product of test spaces. It is interesting to see what happens when these assumptions are relaxed, so in this section we present a number of cases where the theorem fails.

\subsection{The nonsignalling condition}
\label{Not:Signal}

First, the assumption that the joint states are nonsignalling is crucial not only in the proof, but in the very definition of exchangeability. In general, if a state $\omega\in\Omega(\mathcal{A}_1\times\mathcal{A}_2)$ is signalling, then it is not possible to define marginal states $\omega_1$ and $\omega_2$. But exchangeability requires that, given the $n+1$th state in the sequence, the marginal state of the first $n$ systems should be defined and should equal the $n$th state. In fact, if a state $\omega\in\Omega(\mathcal{A}_1\times\mathcal{A}_2)$ is only signalling in one direction then one of the marginals can be defined. For example, if probabilities of outcomes for system 2 depend on which test was performed on system 1, but not vice versa, then $\omega_1$ is well defined. But such a state is not symmetric, thus could not form part of an exchangeable sequence. Arguably, the possibility of performing tests on one system that do not affect the other is part and parcel of what we mean when we speak of separate systems (or separate trials).

\subsection{Simultaneous measurements}
\label{Not:Sim}

Implicit in the definition of the Cartesian product of test spaces is the idea that a test on one system can be regarded as simultaneous with a test on the other system. One can certainly imagine rules for combining systems where this is not the case. As a very simple example, consider two classical bits which have combined in the following strange manner. If bit 1 is measured before bit 2, then the bits are found to be $00$ or $11$ with equal probability. On the other hand, if bit 2 is measured before bit 1, then the outcomes are $01$ and $10$ with equal probability. Note that a suitable no-signalling condition is satisfied and it is possible to define marginal states for these bits. With more complicated test spaces one can construct examples like this which are also symmetric.\footnote{Consider the test space $\mathcal{A}$, with outcomes $E=\{a,b,c,d\}$, and two tests corresponding to $s1=\{a,b\}$, and $s2=\{c,d\}$. Suppose that there are two systems, $A$ and $B$, each described by the test space $\mathcal{A}$, and that they have combined as follows. If the test $s1$ is performed on both systems, the outcomes are completely random and uncorrelated. If the test $s2$ is performed on both systems, then the outcomes are completely random and uncorrelated. On the other hand, if the test $s1$ is performed on either one of the systems, followed by $s2$ on the other, then the joint outcomes are $ac$ or $bd$ with equal probability. If the test $s2$ is performed on either one of the systems, followed by $s1$ on the other, then the joint outcomes are $ad$ or $bc$ with equal probability. It is clear that this peculiar bipartite system does not allow signalling and is also invariant under a permutation of the two systems.}  We leave open the status of the de Finetti theorem in such cases.  

\subsection{Extra degrees of freedom}
\label{Not:Extra}

Another assumption that is implicit in the Cartesian product of test spaces is that the joint state of two systems is completely specified by the probabilities for the joint outcomes $(e,f)$ of each pair of local tests $(s,t)$. 

One can construct theories in which a joint state does indeed determine such probabilities, but is not completely specified by them. There are extra degrees of freedom, bound up in the two systems, which are inaccessible unless some kind of joint operation involving both systems at once is performed. 

As discussed in \cite{Caves:2002aa}, a clear example of this is provided by a modification of quantum theory in which real Hilbert spaces are used rather than complex Hilbert spaces. States and observables correspond to real symmetric, rather than complex Hermitian, operators. For a $2$-dimensional system (a \emph{rebit}), there are measurements corresponding to $x$- and $z$-spin but not $y$-spin. In this case, if $\sigma_y$ is the usual Pauli matrix, then $1/4 (I\otimes I + \sigma_y \otimes \sigma_y )$ is an allowed state of two rebits. But the only way it can be distinguished from $1/4(I\otimes I)$ is if a joint observable such as $\sigma_y\otimes\sigma_y$ is measured. Note that in real quantum theory, this really is a joint observable: it cannot be measured via a separate $\sigma_y$ measurement on each system.

In \cite{Caves:2002aa} it is shown explicitly that the de Finetti theorem fails in real quantum theory. If $\omega^n$ is defined by
\begin{equation}
\label{Not:Counter}
\omega^n = \frac12 \left(\frac{I+\sigma_y}{2}\right)^{\otimes n} + \frac12 \left(\frac{I-\sigma_y}{2}\right)^{\otimes n}
\end{equation}
then it is real and symmetric, and the sequence $\omega^1,\omega^2,\ldots$ is exchangeable. But by the de Finetti theorem for complex quantum theory, the right hand side of Eq.~\eqref{Not:Counter} is the unique de Finetti representation for this sequence.  

\section{Conclusion}
\label{Conc}

In this paper, an infinite de Finetti theorem for test spaces has been presented, which generalizes both the classical and quantum de Finetti theorems.  To illustrate the generality of the result, we have shown that a de Finetti theorem for classical processes, which may also be interpreted as a de Finetti theorem for nonlocal boxes, follows as a special case.

From a practical point of view, proving theorems for test spaces, rather than just for quantum theory, confers significant advantages.  Not only do we achieve a unification of the classical and quantum results, but we also obtain results that apply to essentially arbitrary convex sets. This is potentially relevant when technological limitations prevent the preparation of arbitrary quantum states of certain systems, so that there is an effective restriction to a convex subset.   

From a foundational point of view, this work can be seen as part of a project of understanding what is responsible for the enhanced information processing power of quantum theory, and for the project of deriving quantum theory from information theoretic axioms.  In particular, if one adopts a subjective Bayesian approach to probability, it might be desirable to impose the requirement that, in any reasonable theory, one should be able to make sense of the idea of reconstructing an unknown state of a system by making repeated measurements.  Having a de Finetti theorem for test spaces means that this does indeed make sense for theories in this framework, and that the existing approaches to Bayesian state tomography in quantum theory would generalize straightforwardly.  

There are various directions for future work. It would be useful to produce a finite de Finetti theorem for test spaces. It would also be useful to establish whether a (finite or infinite) de Finetti theorem holds without the assumption made here of finite dimensionality of state spaces. Finally, as discussed in Section~\ref{Not:Sim}, it might be interesting to explore the status of de Finetti-type theorems in cases where systems combine in non-standard ways.

\emph{Note added.} Related results have recently been obtained by M.~Christandl and B.~Toner \cite{Christandl:2007}, who derive a de Finetti theorem for classical processes, analogous to Theorem~\ref{conddefinetti} of the present work, but extended to the finite case.

\begin{acknowledgments}
We would like to thank Carl Caves, Matthias Christandl, Renato Renner and Ben Toner for useful discussions about the de Finetti theorem.  Research at Perimeter Institute for Theoretical Physics is supported in part by the Government of Canada through NSERC and by the Province of Ontario through MRI. At IQC, ML was supported in part by MITACS and ORDCF. ML was also supported in part by grant RFP1-06-006 from The Foundational Questions Institute (fqxi.org). Part of this work was carried out while JB was supported by an HP research fellowship and by the EU FP6-FET Integrated Project SCALA (CT-015714). JB is supported by an EPSRC Career Acceleration Fellowship.
\end{acknowledgments}

\end{document}